\definecolor{light-gray}{gray}{0.80}
\definecolor{light-gray2}{gray}{0.70}
\newtheorem{theorem}{Theorem}[section]
\newtheorem{proposition}[theorem]{Proposition}
\theoremstyle{definition}
\newtheorem{definition}[theorem]{Definition}
\newtheorem{remark}[theorem]{Remark}
\def\Z{\operatorname{\mathbb{Z}}}
\def\C{\operatorname{\mathbb{C}}}
\def\P{\operatorname{\mathbb{P}}}
\def\cX{\operatorname{\mathcal{X}}}
\def\cY{\operatorname{\mathcal{Y}}}
\def\cH{\operatorname{\mathcal{H}}}
\def\cE{\operatorname{\mathcal{E}}}
\def\ve{\varepsilon}
\begin{document}
\begin{center}
{\large {\bf Space of initial conditions for the four-dimensional Fuji-Suzuki-Tsuda system}}
\end{center}

\begin{center}

{\bf Tomoyuki Takenawa}\\
\medskip
{\small
Faculty of Marine Technology, Tokyo University of Marine Science and Technology\\ 2-1-6 Etchu-jima, Koto-ku, Tokyo, 135-8533, Japan\\
E-mail: takenawa@kaiyodai.ac.jp}
\end{center}

\medskip

\begin{abstract}
A geometric study for an integrable 4-dimensional dynamical system so called the Fuji-Suzuki-Tsuda system is given.
By the resolution of indeterminacy, the group of its B\"aklund transformations is lifted to a group of pseudo-isomorphisms between rational varieties obtained from $(\P^1)^4$ by blowing-up along eight 2-dimensional subvarieties and four 1-dimensional subvarieties. The root basis is realised in the N\'eron-Severi bilattices.
A discrete Painlev\'e system with quadratic degree growth is also realised as its translational element. 
\end{abstract}

\section{Introduction}
\subsection{Background and the results}
The Painlev\'e equations are nonlinear second-order ordinary differential equations whose solutions are meromorphic except some fixed points, but not reduced to known functions such as solutions of linear ordinary differential equations or Abel functions.
In \cite{Okamoto1977} Okamoto introduced the notion of space of initial conditions
where the flow of each Painlev\'e equation is regularised on a family of rational algebraic surfaces (minus some subvarieties called vertical leaves) even around poles.
In \cite{Sakai2001} Sakai extended this notion to the discrete case and used it to obtain symmetry group of the equations. A benefit of this approach is that the root systems can be realised geometrically in the Picard group on the surfaces.

In recent years, research on four-dimensional Painlev\'e systems has been progressed mainly from the viewpoint of isomonodromic deformation of linear equations \cite{Sakai2018} and pointed out that there are 4 master equations in the sense that other equations can be obtained from them by limiting procedure \cite{KNS2018}. 
The four-dimensional Fuji-Suzuki-Tsuda system is one of these 4 master equations. 
In \cite{FS2010, Suzuki2013}  Suzuki and Fuji obtained the $2N$-dimensional ($N=1,2,3,\cdots$) system by a reduction from so called the Drinfeld-Sokolov hierarchies of type $A$ and in \cite{Tsuda2014} Tsuda obtained it from so called the UC-hierarchies. 

In this paper, starting from known B\"aclund transformations, we construct the space of initial conditions for the 4D Fuji-Suzuki-Tsuada system.
By resolution of indeterminacy, the B\"aklund transformations are lifted to pseudo-isomorphisms between rational varieties obtained from $(\P^1)^4$ by blowing-up along eight 2-dimensional subvarieties and four 1-dimensional subvarieties\footnote{Sasano has also constructed the space of initial conditions in \cite{Sasano2007}, where he started from a four-dimensional analog of the Hirzebruch surface. In our study, the structure of the Picard group becomes simple, since we start from the direct product of $\P^1$ instead of the analog of the  Hirzebruch surface.}. The root basis is realised in the N\'eron-Severi bilattice. A discrete Painlev\'e system with quadratic degree growth is also realised as its translational element.

\subsection{Fuji-Suzuki-Tsuda system and its B\"acklund transformations}
The 4D Fuji-Suzuki-Tsuada system is a Hamiltonian system 
\begin{align}
& \frac{dq_i}{dt}=\frac{\partial H}{\partial p_i},\quad
\frac{dp_i}{dt}=-\frac{\partial H}{\partial q_i},\quad
(i=1,2)
\end{align}
with
\begin{align}
t(t -1)H = &H_{{\rm VI}}(q_1, p_1; a_2, a_0 + a_4, a_3 + a_5 - \eta, \eta a_1)\nonumber \\ \nonumber
&+ H_{\rm VI}(q_2, p_2; a_0 + a_2, a_4, a_1 + a_3 - \eta, \eta a_5)\\
&+ (q_1 - t)(q_2 - 1) \{(q_1p_1 + a_1)p_2 + p_1(p_2q_2 + a_5)\}
\end{align}
and $a_0+a_1+\cdots+a_5=1$,
where $H_{\rm VI}$ is the polynomial Hamiltonian of  the sixth Painlev\'e equation introduced by Okamoto in \cite{Okamoto1980}:\footnote{
Precisely saying, $H_{\rm VI}$ is the Hamiltonian introduced by Okamoto multiplied by $t(t-1)$.}
\begin{align}
H_{\rm VI}(q, p; a, b, c, d) =& q(q - 1)(q - t)p^2 - \{(a - 1)q(q - 1)\nonumber \\
&+ bq(q - t) + c(q - 1)(q - t)\} p + dq.
\end{align}
This equation has B\"acklund transformations, i.e. transformations of variables
which keep the equation except parameters as the following two tables\footnote{
~ $s_i$'s are reported in \cite{FS2010}, while
$\pi$ is in \cite{Suzuki2013}, where a misprint has been corrected, and 
$\rho$ is in \cite{Tsuda2014}.}.

\begin{center}
Actions on parameters
\begin{tabular}{|c|c|c|c|c|c|c|c|c|}\hline
\rule{0pt}{2.5ex} & $\bar{a}_0$&$\bar{a}_1$&$\bar{a}_2$&$\bar{a}_3$&$\bar{a}_4$&$\bar{a}_5$&$\bar{\eta}$&$\bar{t}$\\\hline 
$s_{0}$&$-a_0$&$a_0+a_1$&$a_2$&$a_3$&$a_4$&$a_0+a_5$&$\eta+a_0$&$t$\\ \hline
$s_{1}$&$a_0+a_1$&$-a_1$&$a_1+a_2$&$a_3$&$a_4$&$a_5$&$\eta-a_1$&$t$\\ \hline
$s_{2}$&$a_0$&$a_1+a_2$&$-a_2$&$a_2+a_3$&$a_4$&$a_5$&$\eta+a_2 $&$t$\\ \hline
$s_{3}$&$a_0$&$a_1$&$a_2+a_3$&$-a_3$&$a_3+a_4$&$a_5$&$\eta-a_3$&$t$\\ \hline
$s_{4}$&$a_0$&$a_1$&$a_2$&$a_3+a_4$&$-a_4$&$a_4+a_5$&$\eta+a_4$&$t$\\ \hline
$s_{5}$&$a_0+a_5$&$a_1$&$a_2$&$a_3$&$a_4+a_5$&$-a_5$&$\eta-a_5$&$t$\\ \hline
$\pi$ 
&$a_1$&$a_2$&$a_3$&$a_4$&$a_5$&$a_0$&$-\eta$&$t^{-1}$\\ \hline
$\rho$ &$a_0$&$a_5$&$a_4$&$a_3$&$a_2$&$a_1$&$\eta$&$t^{-1}$\\ \hline
\end{tabular}
\end{center}
\clearpage
\begin{center}
Actions on dependent variables
\begin{tabular}{|c|c|c|c|c|}\hline
\rule{0pt}{2.5ex} & $\bar{q}_1$&$\bar{q}_2$&$\bar{p}_1$&$\bar{p}_2$\\\hline 
$\rule{0pt}{2.5ex} s_{0}$&$q_1$&$q_2$&$p_1-a_0(q_1-q_2)^{-1}$&$p_2+a_0(q_1-q_2)^{-1}$\\\hline
$\rule{0pt}{2.5ex} s_{1}$&$q_1+a_1p_1^{-1}$&$q_2$&$p_1$&$p_2$\\\hline
$\rule{0pt}{2.5ex} s_{2}$&$q_1$&$q_2$&$p_1-a_2(q_1-t)^{-1}$&$p_2$\\\hline
$\rule{0pt}{2.5ex} s_{3}$&$q_1+a_3q_1B^{-1}$&$q_2+a_3q_2B^{-1}$&
$p_1-a_3p_1A^{-1}$&$p_2-a_3p_2A^{-1}$\\\hline
$\rule{0pt}{2.5ex} s_{4}$&$q_1$&$q_2$&$p_1$&$p_2-a_4(q_2-1)^{-1}$\\\hline
$\rule{0pt}{2.5ex} s_{5}$&$q_1$&$q_2+a_5 p_2^{-1}$&$p_1$&$p_2$\\\hline
$\rule{0pt}{2.5ex} \pi $&$C_1C_t^{-1} $&$C_2C_t^{-1} $&$-(q_1-t)C_t(t-1)^{-1} $&$-(q_2-q_1)C_t(t-1)^{-1}$\\\hline
$\rule{0pt}{2.5ex} \rho$&$q_2 t^{-1}$&$q_1 t^{-1}$&$p_2 t$&$p_1 t$\\\hline
\end{tabular}
\end{center}
where $A=q_1p_1+q_2p_2+\eta$, $B=A-a_3$,  $C_1=A-p_1-p_2$,
$C_2=A-tp_1-p_2$ and $C_t=A-tp_1-tp_2$.

These transformations constitute so called the extended affine Weyl group of type $A_5^{(1)}$, whose fundamental relations are 
\begin{align*}
&s_i^2 =\rho^2=\pi^6={\rm identity}\\
 &s_i \circ s_{i+1} \circ s_i = s_{i+1} \circ s_i \circ s_{i+1},\qquad s_i \circ s_j = s_j\circ s_i \quad \mbox(|i-j|>1)\\
&s_{i} \circ \pi= \pi \circ s_{i+1},\qquad
s_i \circ \rho =\rho \circ s_{6-i}, \qquad (\pi \circ \rho)^2={\rm identity},
\end{align*}
where indices are considered to be cyclic as $\alpha_{i+6}=\alpha_i$. 

In order to construct its space of initial conditions without blowing-downs, we introduce a new coordinate system $(q_i,r_i)=(q_i, q_i p_i)$ ($i=1,2$), exactly the same manner with the case of two-dimensional Painlev\'e equations.

The action on dependent variables become

\begin{center}
Actions on new dependent variables
\begin{tabular}{|c|c|c|c|c|}\hline
\rule{0pt}{2.5ex} & $\bar{q}_1$&$\bar{q}_2$&$\bar{r}_1$&$\bar{r}_2$\\\hline 
$\rule{0pt}{2.5ex} s_{0}$&$q_1$&$q_2$&$r_1-a_0q_1(q_1-q_2)^{-1}$&$r_2+a_0q_2(q_1-q_2)^{-1}$\\\hline
$\rule{0pt}{2.5ex} s_{1}$&$q_1(1+a_1 r_1^{-1})$&$q_2$&$r_1+a_1$&$r_2$\\\hline
$\rule{0pt}{2.5ex} s_{2}$&$q_1$&$q_2$&$r_1-a_2q_1(q_1-t)^{-1}$&$r_2$\\\hline
$\rule{0pt}{2.5ex} s_{3}$&$q_1+a_3q_1D^{-1}$&$q_2+a_3q_2D^{-1}$&
$r_1$&$r_2$\\\hline
$\rule{0pt}{2.5ex} s_{4}$&$q_1$&$q_2$&$r_1$&$r_2-a_4q_2(q_2-1)^{-1}$\\\hline
$\rule{0pt}{2.5ex} s_{5}$&$q_1$&$q_2(1+a_5 r_2^{-1})$&$r_1$&$r_2+a_5$\\\hline
$\rule{0pt}{2.5ex} \pi $&$E_1E_t^{-1} $&$E_2E_t^{-1} $&$-(q_1-t)E_1F^{-1} $&$-(q_2-q_1)E_2 F^{-1}$\\\hline
$\rule{0pt}{2.5ex} \rho$&$q_2 t^{-1}$&$q_1 t^{-1}$&$r_2$&$r_1$\\\hline
\end{tabular}
\end{center}
where $D=r_1+r_2-a_3+\eta$, $E_1=q_1q_2(r_1+r_2+\eta)-q_2r_1-q_1r_2$, 
$E_2=q_1q_2(r_1+r_2+\eta)-tq_2r_1-q_1r_2$, $E_t=q_1q_2(r_1+r_2+\eta)-tq_2r_1-tq_1r_2$ and $F=q_1q_2(t-1)$.

\subsection{Basic facts}
In this paper, we use the following basic facts; see \S~2 of \cite{CT2019} for details. 
 
Let $\cX$ and $\cY$ be smooth projective varieties.  For a birational map $f:\cX \to \cY$, let $I(f)$ denote the indeterminate set (i.e. the set of points where $f$ is not defined) of $f$ in $\cX$.
 
We say a sequence of birational maps $\varphi_n: \cX_n \to\cX_{n+1}$ for smooth projective varieties $\cX_n$ ($n\in \Z$)   
to be algebraically stable if $$\left(\varphi_{n+k-1}  \circ \cdots \circ  \varphi_{n+1}  \circ \varphi_n\right)^* =  \varphi_n^* \circ \varphi_{n+1}^* \circ \cdots \circ \varphi_{n+k-1}^* $$ 
holds as a mapping from the Picard group of $\cX_{n+k}$ to that of $\cX_n$
for any integers $n$ and $k\geq 1$.

\begin{proposition}[\cite{BK2008, Bayraktar2012}] \label{AS2}
A sequence of birational maps $\varphi_n: \cX_n \to\cX_{n+1}$ for smooth projective varieties $\cX_n$ ($n\in {\mathbb Z}$)   
is algebraically stable if and only if
there do not exist integers $n$ and $k\geq 1$ and a divisor $D$ on $\cX_{n-1}$ such that $\varphi(D\setminus I(\varphi_{n-1} ))\subset I(\varphi_{n+k-1}  \circ \cdots \circ  \varphi_{n+1}  \circ \varphi_n)$.\footnote{
This statement is a non-autonomous analog of a proposition shown in \cite{BK2008, Bayraktar2012}. The proof does not change except in notations.}
\end{proposition}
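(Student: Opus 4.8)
The plan is to reduce the statement to a two-map prototype and then bootstrap to arbitrary compositions by a formal induction. Throughout I use the description of the pullback recalled in \S2 of \cite{CT2019}: for a birational map $f:\cX\to\cY$ of smooth projective varieties one picks a smooth projective $W$ with birational morphisms resolving $f$ and sets $f^{*}=a_{*}b^{*}$, the only further tools being the composition rule for resolutions and the projection formula $b_{*}b^{*}=\mathrm{id}$.

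The prototype I would establish first is: for birational maps $f:\cX\to\cY$ and $g:\cY\to\cZ$, one has $(g\circ f)^{*}=f^{*}\circ g^{*}$ on Picard groups if and only if there is no irreducible hypersurface $D\subset\cX$ with $f(D\setminus I(f))\subset I(g)$. To prove this, fix a smooth projective $W$ carrying birational morphisms $a:W\to\cX$, $b:W\to\cY$, $c:W\to\cZ$ with $b=f\circ a$ and $c=g\circ b$, so that $f^{*}=a_{*}b^{*}$, $g^{*}=b_{*}c^{*}$ and $(g\circ f)^{*}=a_{*}c^{*}$. For an ample class $H$ on $\cZ$ one then computes
\[
f^{*}(g^{*}H)-(g\circ f)^{*}H=a_{*}\bigl(b^{*}b_{*}(c^{*}H)-c^{*}H\bigr).
\]
The class $b^{*}b_{*}(c^{*}H)-c^{*}H$ has vanishing pushforward under $b$, hence is supported on $b$-exceptional prime divisors $E$. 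Intersecting with a curve $\gamma$ contracted by $b$ inside a general fibre over $b(E)$ and using $c=g\circ b$ shows that the coefficient of $E$ vanishes when $b(E)\not\subset I(g)$ (there $g$ is a morphism, so $c$ contracts $\gamma$ as well) and is strictly positive when $b(E)\subset I(g)$ (there $c$ does not contract $\gamma$, and $H$ is ample). Applying $a_{*}$ kills the $a$-exceptional $E$'s and sends each of the others to the hypersurface $D=a(E)$, which is exactly a hypersurface contracted by $f$ with $f(D)=b(E)$. Thus the displayed difference is an effective class that is nonzero precisely when some contracted $D$ satisfies $f(D\setminus I(f))\subset I(g)$; since the very ample classes generate the Picard group, this settles both directions.

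Next I would record a purely formal equivalence. Writing $\Psi_{n,k}=\varphi_{n+k-1}\circ\cdots\circ\varphi_{n}$, algebraic stability is the family of identities $\Psi_{n,k}^{*}=\varphi_{n}^{*}\circ\cdots\circ\varphi_{n+k-1}^{*}$. I claim this family is equivalent to the one-step peeling identities $\Psi_{m,j}^{*}=\varphi_{m}^{*}\circ\Psi_{m+1,j-1}^{*}$ for all $m$ and $j\geq2$. One direction is immediate by substituting the factorizations of $\Psi_{m,j}$ and $\Psi_{m+1,j-1}$; the other is a one-line induction on $k$, peeling off $\varphi_{n}$ and invoking the factorization of $\Psi_{n+1,k-1}$.

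Finally I would combine the two ingredients. Applying the prototype with $f=\varphi_{n-1}$ and $g=\Psi_{n,k}$ identifies the one-step identity $\Psi_{n-1,k+1}^{*}=\varphi_{n-1}^{*}\circ\Psi_{n,k}^{*}$ with the nonexistence of a divisor $D$ on $\cX_{n-1}$ such that $\varphi_{n-1}(D\setminus I(\varphi_{n-1}))\subset I(\Psi_{n,k})$, which is exactly the configuration forbidden in the statement. Letting $n$ range over all integers and $k\geq1$, and combining with the previous paragraph, yields the asserted equivalence. The main obstacle is the prototype: one must control the discrepancy $b^{*}b_{*}(c^{*}H)-c^{*}H$ and show its coefficients detect $I(g)$ without accidental cancellation, which is precisely where ampleness of $H$ is indispensable. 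This is the content of \cite{BK2008, Bayraktar2012}, and since every step above is local to a single composition, only the indexing changes in passing to the non-autonomous setting.
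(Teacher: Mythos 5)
Your overall architecture is sound, and it is in fact all the paper itself claims to supply: the paper gives no proof of Proposition~\ref{AS2}, its footnote delegating everything to \cite{BK2008, Bayraktar2012} with the remark that only notation changes, and your formal peeling induction (recovering full algebraic stability from the one-step identities via the prototype applied to $f=\varphi_{n-1}$, $g=\Psi_{n,k}$) is precisely that notational bookkeeping. The genuine gap is inside your proof of the prototype, in the coefficient analysis of $N=b^*b_*(c^*H)-c^*H=\sum_E m_E E$. The parenthetical claim ``when $b(E)\subset I(g)$, $c$ does not contract $\gamma$'' is false, and with it the mechanism of reading off coefficients from fibre-curve intersections collapses. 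Concretely, let $g$ be the standard Cremona involution of $\P^2$ with $p_0\in I(g)$, let $b_0\colon W_0\to \P^2$ be the blow-up of the three indeterminacy points, on which $g$ lifts to a morphism $c_0\colon W_0\to\P^2$, and let $\mu\colon W\to W_0$ be a further blow-up at a general point $x$ of the exceptional curve $E_0$ over $p_0$, with exceptional curve $F$; set $b=b_0\circ\mu$, $c=c_0\circ\mu$. Then $b(F)=\{p_0\}\subset I(g)$, yet $c$ contracts $F$ to the point $c_0(x)$; nevertheless $m_F=1>0$, since here $N=\widetilde{E}_0+F+E_1+E_2$ (with $\widetilde{E}_0$ the strict transform of $E_0$ and $E_1,E_2$ the curves over the other two points). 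The fibre-curve test cannot detect this: $F\cdot N=F\cdot\widetilde{E}_0+F^2=1-1=0$, in agreement with $-H\cdot c_*F=0$, so intersecting with $\gamma=F$ says nothing about $m_F$. In general one only obtains the mixed relation $\sum_{E'}m_{E'}(\gamma\cdot E')=-H\cdot c_*\gamma$, which cannot isolate a single coefficient once exceptional divisors are nested; and nested (infinitely near) blow-ups are unavoidable in the present setting --- the paper's own varieties are built with $C_8\subset\cE_7$, $C_{10}\subset\cE_9$ and $C_{12}\subset\cE_{11}$.

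The standard repair, which is what \cite{BK2008, Bayraktar2012} actually do, is to take $H$ a general member of a very ample system. Then $c^*H$ is the irreducible strict transform, $b_*c^*H$ is the strict transform $H'=\overline{g^{-1}(H)}$, and $N=b^*H'-c^*H=\sum_E \operatorname{ord}_E(b^*H')\,E$ holds at the level of divisors; this gives effectivity for free and shows $m_E>0$ exactly when $b(E)\subset \operatorname{Supp}(H')$. Vanishing for $b(E)\not\subset I(g)$ follows because a general $H$ avoids $\overline{g\bigl(b(E)\setminus I(g)\bigr)}$, so $b(E)\not\subset H'$. Positivity for $b(E)\subset I(g)$ follows from the inclusion $I(g)\subset \operatorname{Supp}(H')$, and this is where ampleness genuinely enters: over each point of $I(g)$ the graph of $g$ has a positive-dimensional fibre (Zariski's main theorem), whose image in $\cZ$ must meet the ample $H$, so the strict transform of $H$ under $g^{-1}$ passes through that point. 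Your instinct that ampleness of $H$ is the crux is correct, but it is used to force $\operatorname{Supp}(g^*H)$ to contain $I(g)$, not to prevent $c$ from contracting fibre curves. With the prototype established this way, the rest of your proposal (the $a_*$ step, the no-cancellation argument, and the peeling induction) goes through verbatim.
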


We call a birational mapping $f:\cX\to \cY$ a {\it pseudo-isomorphism}
if $f$ is isomorphic except on finite number of subvarieties of codimension two at least. This condition is equivalent to that there is no prime divisor pulled back to the zero divisor by $f$ or $f^{-1}$. Hence, if $\varphi_n$ is a pseudo-isomorphism for each $n$, then $\{\varphi_n\}_{n\in \Z}$ and $\{\varphi_n^{-1}\}_{n\in \Z}$ are algebraically stable.  

\begin{proposition}[\cite{DO1988}]\label{NSauto}
Let $\cX$ and $\cY$ be smooth projective varieties and
$\varphi$ a pseudo-isomorphism from $\cX$ to $\cY$. Then
$\varphi$ acts on the N\'eron-Severi bi-lattice as an automorphism preserving the intersections.
\end{proposition}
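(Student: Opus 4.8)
The plan is to pass to a common resolution and reduce everything to the projection formula. By resolution of indeterminacy (here every variety in sight is an iterated blow-up of $(\P^1)^4$) choose a smooth projective $\cZ$ together with birational morphisms $p\colon\cZ\to\cX$ and $q\colon\cZ\to\cY$ with $\varphi=q\circ p^{-1}$. For a birational morphism of smooth projective varieties the pullback on divisor classes is a split injection, $p_*p^*=\mathrm{id}$ by the projection formula, and one has a direct sum decomposition $N^1(\cZ)=p^*N^1(\cX)\oplus\cE_p$, where $\cE_p=\bigoplus_i\Z E_i$ is spanned by the $p$-exceptional prime divisors; the analogous statement holds for $q$ with exceptional part $\cE_q=\bigoplus_j\Z F_j$. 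Throughout I write $N^1$ for divisor classes and $N_1$ for curve classes, the bilattice being the pair equipped with the intersection pairing $N^1\times N_1\to\Z$.

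The first key step is to translate the pseudo-isomorphism hypothesis into a statement on $\cZ$. Every prime divisor of $\cZ$ is either $p$-exceptional or the strict transform of a prime divisor on $\cX$, and a prime divisor $D$ on $\cX$ is contracted by $\varphi$ if and only if its strict transform in $\cZ$ is $q$-exceptional. Since $\varphi$ being a pseudo-isomorphism means that neither $\varphi$ nor $\varphi^{-1}$ contracts a prime divisor, no strict transform can be exceptional for the other morphism, so the sets of $p$- and $q$-exceptional prime divisors coincide; that is, $\cE_p=\cE_q=:\cE$ inside $N^1(\cZ)$. This identification is exactly what will force the induced action to be invertible and intersection-preserving.

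Next I define the action and check it is an automorphism. On divisors set $\varphi_\ast:=q_*\circ p^*\colon N^1(\cX)\to N^1(\cY)$. From $q_*q^*=\mathrm{id}$ and $q_*F_j=0$ one gets $q^*q_*\alpha=\alpha-(\cE\text{-component of }\alpha)$, and since $p^*N^1(\cX)$ meets $\cE$ only in $0$ this yields $q^*q_*p^*=p^*$; hence $p_*q^*\circ q_*p^*=p_*p^*=\mathrm{id}$, so $\varphi_\ast$ is invertible with inverse $p_*\circ q^*$. On curves I use the canonical lift $p^\sharp\colon N_1(\cX)\to N_1(\cZ)$, the unique class with $p_*p^\sharp=\mathrm{id}$ and $E_i\cdot p^\sharp C=0$ for all $i$, and set $\varphi_\ast:=q_*\circ p^\sharp$. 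Intersection preservation then reduces to one computation with the projection formula: for $D\in N^1(\cX)$ and $C\in N_1(\cX)$,
\[
\langle\varphi_\ast D,\varphi_\ast C\rangle_\cY=\langle q^*q_*p^*D,\,p^\sharp C\rangle_\cZ=\langle p^*D,\,p^\sharp C\rangle_\cZ=\langle D,\,p_*p^\sharp C\rangle_\cX=\langle D,C\rangle_\cX,
\]
where the second equality is again the identity $q^*q_*p^*=p^*$ coming from $\cE_p=\cE_q$.

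The main obstacle is the curve side. One must show the lift $p^\sharp$ is well defined, which amounts to nondegeneracy of the intersection matrix of the exceptional divisors against the $p$-contracted curves (automatic for smooth blow-ups, and valid in general after tensoring with $\Q$), and one must verify that $q_*\circ p^\sharp$ is genuinely inverse to the map built from $\varphi^{-1}$ — here once more the coincidence $\cE_p=\cE_q$ is precisely what makes the exceptional contributions cancel. Granting these points, $\varphi$ acts on the pair $(N^1,N_1)$ by mutually inverse isomorphisms respecting the pairing, i.e. as an automorphism of the N\'eron--Severi bilattice preserving intersections.
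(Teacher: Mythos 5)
The paper offers no proof of this proposition at all (it is quoted from \cite{DO1988}), so your argument has to stand on its own. Its architecture is the right one: common resolution $p,q\colon\cZ\to\cX,\cY$, the identification $\cE_p=\cE_q=:\cE$ forced by the pseudo-isomorphism hypothesis, the divisor action $q_*p^*$, the dual lift $p^\sharp$ on curves, and the projection formula. But one step is false as stated: the identity $q^*q_*p^*=p^*$. What is true, as you say, is that $q^*q_*\alpha=\alpha-(\text{the }\cE_q\text{-component of }\alpha)$ for the decomposition $N^1(\cZ)=q^*N^1(\cY)\oplus\cE_q$; so $q^*q_*p^*D=p^*D$ would require $p^*N^1(\cX)\subseteq q^*N^1(\cY)$, and this does \emph{not} follow from $p^*N^1(\cX)\cap\cE=0$ --- two complements of the same direct summand $\cE$ need not coincide. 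In fact the inclusion genuinely fails for pseudo-isomorphisms: for the Atiyah flop between the two projective small resolutions $\cX,\cY$ of the cone $xy=zw$ (a pseudo-isomorphism in the paper's sense), with common resolution $\cZ$, exceptional divisor $E\cong\P^1\times\P^1$ and flopped curve $C$, one computes $p^*D=q^*D'-(D\cdot C)\,E$, which lies outside $q^*N^1(\cY)$ whenever $D\cdot C\neq 0$. So the identity you lean on twice fails exactly in the interesting cases.

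The gap is local rather than structural, because both uses of the false identity can be repaired with ingredients you already introduced. Write $q^*q_*p^*D=p^*D-e_D$ with $e_D\in\cE$. For invertibility: $p_*q^*q_*p^*D=p_*(p^*D-e_D)=D$, since $e_D\in\cE=\cE_p$ is killed by $p_*$; this is the honest place where $\cE_p=\cE_q$ does its work, not inside a claim that the two pulled-back Picard groups agree in $N^1(\cZ)$ (they do not). For the pairing: $\langle q^*q_*p^*D,\,p^\sharp C\rangle_\cZ=\langle p^*D,\,p^\sharp C\rangle_\cZ-\langle e_D,\,p^\sharp C\rangle_\cZ=\langle p^*D,\,p^\sharp C\rangle_\cZ$, because $p^\sharp C$ is, by your own definition, orthogonal to every class in $\cE$. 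With these two corrections, plus the deferred verification that $q^\sharp q_*\gamma=\gamma$ for $\gamma\in\cE^\perp$ (immediate from $\cE_p^\perp=\cE_q^\perp$ and the injectivity of $q_*$ on $\cE_q^\perp$), your chain of equalities closes and the proof is complete.
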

The N\'eron-Severi bi-lattice of a smooth rational variety $\cX$ is isomorphic to $H^2(\cX, \Z) \times H_2(\cX, \Z)$ which is explicitly given in the following.\\

\noindent {\it Blowup of a direct product of $\P^1$}\\
In accordance with \cite{TT2009}, we take the basis of the N\'eron-Severi bi-lattice as follows. 

Let $\cX$ be a rational variety obtained by $K$ successive blowups from $(\P^1)^N$  and 
$$({\bf x}_1,{\bf x_2},\dots, {\bf x}_N)=(x_{10}:x_{11},~ x_{20}:x_{21},~\cdots,~ x_{N0}:x_{N1})$$
the direct product of homogeneous coordinate chart. 
Let $\cH_i$ denote the total transform of the class of a hyper-plane ${\bf c}_i{\bf x}_i=c_{i0}x_{i0}+c_{i1}x_{i1}=0$, where ${\bf c}_i=(c_{i0}:c_{i1})$ is a constant vector in $\P^{1}$, and $\cE_k$ the total transform of the $k$-th exceptional divisor class. Let $h_i$ denote the total transforms of the class of a line
\begin{align*}
\{{\bf x}~|~&\forall j\neq i, (x_{j0}:x_{j1})=(c_{j0}:c_{j1})\},
\end{align*}
where 
 ${\bf c}_j=(c_{j0}:c_{j1})$'s ($j \neq i$) are constant vectors in $\P^1$, and $e_k$ the class of a line in a fiber of the $k$-th blow-up. Note that the exceptional divisor for a blowing-up along a $d$-dimensional subvariety $V$ is isomorphic to $V\times \P^{N-d-1}$, where $\P^{N-d-1}$ is a fiber.   

Then the Picard group $\simeq H^2(\mathcal{X},\Z)$ and its Poincar\'e dual $\simeq H_2(\mathcal{X},\Z)$ are lattices
\begin{align}\label{NSbasis}
H^2(\mathcal{X},\Z)=\bigoplus_{i=1}^n \Z \cH_i \oplus \bigoplus_{k=1}^{K} \Z \cE_k,\quad 
H_2(\mathcal{X},\Z)=\bigoplus_{i=1}^n \Z h_i \oplus \bigoplus_{k=1}^{K} \Z e_k
\end{align}
and the intersection form is given by
\begin{align}
\langle \cH_i, h_j\rangle = \delta_{ij},\quad 
\langle \cE_k, e_l\rangle = -\delta_{kl},\quad
\langle \cH_i, e_k\rangle =0,\quad
\langle \cE_k, h_i\rangle =0.
\end{align} 

\noindent {\it Degree of a mapping}\\
Let $\psi$ be a rational mapping from $\C^N$ to itself: 
$$ \psi: (\bar{x}_1,\dots,\bar{x}_N)= (\psi_1(x_1,\cdots,x_N),\dots,\psi_N(x_1,\cdots,x_N)).$$ 
The degree of $\bar{x_i}$ of $\psi$ with respect $x_j$ is defined as
the degree of $\psi_i$ as a rational function of $x_j$, i.e. the maximum of degrees of
numerator and denominator.    
If $\C^N$ is compactified as $(\P^1)^N$, 
the degree of $\bar{x_i}$ of $\psi$ with respect $x_j$ is given by
the coefficient of $\cH_j$ in $\psi^*(\cH_i)$. This formula also holds when
$(\P^1)^N$ is blown-up if $\cH_i$ denotes the total transform with respect to blowing-up as the above settings.

\section{Construction of the space of initial conditions}

First of all, let us compactify the phase space $\{(q_1,q_2,r_1,r_2)\in \C^4\}$
to $(\P^1)^4$ by introducing new coordinates $Q_1=q_1^{-1}$,  $Q_2=q_2^{-1}$, 
 $R_1=r_1^{-1}$,  $R_2=r_2^{-1}$ around $q_1=\infty$, $q_2=\infty$ and so on. 

Next, we search  $(\P^1)^4$ for hyper-surfaces which are contracted to lower dimensional subvarieties. Such hyper-surfaces appear as a factor of the numerator of Jacobian of a map $\partial(\bar{q}_1,\bar{q}_2, \bar{r}_1,\bar{r}_2)/\partial(q_1,q_2,r_1,r_2)$. However, note that $(\P^1)^4$ has essentially $2^4=16$ charts, and the Jacobian should be considered between all the pairs of charts. 

For example, the Jacobian  $\partial(\bar{q}_1,\bar{q}_2, \bar{r}_1,\bar{r}_2)/\partial(q_1,q_2,r_1,r_2)$ of $s_0$ is a constant of 1 in the original chart, while it becomes nontrivial on another chart as
$$\partial(\bar{q}_1,\bar{q}_2, \bar{R}_1,\bar{R}_2)/\partial(q_1,q_2,r_1,r_2)
=\frac{(q_1 - q_2)^4}{(a_0 q_1 - q_1 r_1 + q_2 r_1)^2 (a_0 q_2 + q_1 r_2 - q_2 r_2)^2}.$$
Thus, the image of the generic part of hyper-surface $q_1-q_2=0$ is contracted to some lower dimensional subvariety.
Indeed, substituting $q_2=q_1+\ve$ to $(\bar{q}_1,\bar{q}_2,\bar{R}_1,\bar{R}_2)$, we have
$$(\bar{q}_1,\bar{q}_2,\bar{R}_1,\bar{R}_2)=\left(q_1,q_1+\ve, \frac{\ve}{a_0q_1}+O(\ve^2),-\frac{\ve}{a_0q_1}+O(\ve^2)\right),$$
where $O$ is the big O asymptotic notation, 
Hence, we can see that the generic part of hyper-surface $q_1-q_2=0$ is contracted to a two-dimsensional variety
\begin{align}\label{image of s0}
&\{(\bar{q}_1,\bar{q}_2,\bar{R}_1,\bar{R}_2) \in \C^4~|~
\bar{q}_1-\bar{q}_2=0,\  \bar{R}_1= \bar{R}_2=0\}.
\end{align}

Similarly, for $s_1$, from
$$\partial(\bar{q}_1,\bar{q}_2, \bar{r}_1,\bar{r}_2)/\partial(q_1,q_2,r_1,r_2)
=\frac{a_1 + r_1}{r_1}$$ 
and
$$\partial(\bar{Q}_1,\bar{q}_2, \bar{r}_1,\bar{r}_2)/\partial(q_1,q_2,r_1,r_2)
=-\frac{r_1}{q_1^2 (a_1 + r_1)},$$ 
we see that $r_1+a_1=0$ is contracted to
\begin{align}
&\{(\bar{q}_1,\bar{q}_2,\bar{r}_1,\bar{r}_2) \in \C^4~|~
\bar{q}_1=0,  \bar{r}_1=0\}
\end{align} 
and $r_1=0$ is contracted to
\begin{align}
&\{(\bar{Q}_1,\bar{q}_2,\bar{r}_1,\bar{r}_2) \in \C^4~|~
\bar{Q}_1=0,  \bar{r}_1+\bar{a}_1=0\}.
\end{align} 

For $s_2$, from
$$\partial(\bar{q}_1,\bar{q}_2, \bar{R}_1,\bar{r}_2)/\partial(q_1,q_2,r_1,r_2)
=-\frac{(q_1-t)^2}{(a_2 q_1 - q_1 r_1 + r_1 t)^2},$$ 
we see $q_1-t=0$ is contracted to
\begin{align}
&\{(\bar{q}_1,\bar{q}_2,\bar{R}_1,\bar{r}_2) \in \C^4~|~
\bar{q}_1-\bar{t}=0,  \bar{R}_1=0\}.
\end{align}

For $s_3$, from
$$\partial(\bar{q}_1,\bar{q}_2, \bar{r}_1,\bar{r}_2)/\partial(q_1,q_2,r_1,r_2)
=\frac{(r_1 + r_2+\eta)^2}{(r_1 + r_2 -a_3 + \eta)^2}$$
and 
$$\partial(\bar{Q}_1,\bar{Q}_2, \bar{r}_1,\bar{r}_2)/\partial(q_1,q_2,r_1,r_2)
=\frac{(r_1 + r_2 -a_3 + \eta)^2}{q_1^2 q_2^2  (r_1 + r_2+\eta)^2},$$ 
we see that
$r_1 + r_2 + \eta=0$ is contracted to
\begin{align}
&\{(\bar{q}_1,\bar{q}_2,\bar{r}_1,\bar{r}_2) \in \C^4~|~
\bar{q}_1=\bar{q}_2=0,  \bar{r}_1+\bar{r}_2-\bar{a}_3+\bar{\eta}=0\}
\end{align} 
and
$r_1 + r_2 -a_3 + \eta=0$ is contracted to
\begin{align}
&\{(\bar{Q}_1,\bar{Q}_2,\bar{r}_1,\bar{r}_2) \in \C^4~|~
\bar{Q}_1=\bar{Q}_2=0,  \bar{r}_1+\bar{r}_2+\bar{\eta}=0\}.
\end{align} 

For $s_4$, from
$$\partial(\bar{q}_1,\bar{q}_2, \bar{r}_1,\bar{R}_2)/\partial(q_1,q_2,r_1,r_2)
=-\frac{(q_2-1)^2}{(a_4 q_2 + r_2 - q_2 r_2)^2},$$ 
we see that $q_2-1=0$ is contracted to
\begin{align}
&\{(\bar{q}_1,\bar{q}_2,\bar{r}_1,\bar{R}_2) \in \C^4~|~
\bar{q}_2-1=0,  \bar{R}_2=0\}.
\end{align}

For $s_5$, from
$$\partial(\bar{q}_1,\bar{q}_2, \bar{r}_1,\bar{r}_2)/\partial(q_1,q_2,r_1,r_2)
=\frac{a_5+r_2}{r_2}$$
and 
$$\partial(\bar{q}_1,\bar{Q}_2, \bar{r}_1,\bar{r}_2)/\partial(q_1,q_2,r_1,r_2)
=-\frac{r_2}{q_2^2 (r_2+a_5)},$$ 
we see that $r_2+a_5=0$ is contracted to
\begin{align}
&\{(\bar{q}_1,\bar{q}_2,\bar{r}_1,\bar{r}_2) \in \C^4~|~
\bar{q}_2=0,  \bar{r}_2=0\}
\end{align} 
and $r_2=0$ is contracted to
\begin{align}
&\{(\bar{q}_1,\bar{Q}_2,\bar{r}_1,\bar{r}_2) \in \C^4~|~
\bar{Q}_2=0,  \bar{r}_2+\bar{a}_5=0\}.
\end{align} 

Let us take a closer look at the image \eqref{image of s0} of $q_1-q_2=0$ by $s_0$ 
and introduce new coordinates (by blowing-up)
\begin{align}\label{U11}
(w,q_2,u,v)=\left(\frac{q_1-q_2}{R_1}, q_2, R_1,\frac{R_2}{R_1}\right).
\end{align}
Then, from  
$$\partial(\bar{w},\bar{q}_2, \bar{u},\bar{v})/\partial(q_1,q_2,r_1,r_2)
=\frac{(q_1-q_2)^2}{(a_0q_2+q_1 r_2 -q_2 r_2)^2},$$
we see that $q_1-q_2=0$ is still contracted to 
\begin{align}
&\{ (\bar{w},\bar{q}_2,\bar{u},\bar{v}) \in \C^4~|~
\bar{w}-\bar{a}_0\bar{q}_2=0,  \bar{u}=0, \bar{v}+1=0\}.
\end{align} 
Let us introduce further new coordinates (by blowing-up)
\begin{align}\label{U12}
(w',q_2,u',v')=\left(\frac{w-a_0q_2}{u}, q_2, u,\frac{v+1}{u}\right).
\end{align}
Then, from  
$$\partial(\bar{w}',\bar{q}_2, \bar{u}',\bar{v}')/\partial(q_1,q_2,r_1,r_2)
=-1,$$
we see that there is no hyper-surface in the affine space of $(q_1,q_2,r_1,r_2)$ chart contracted to a subvariety whose generic part is included in the affine space of $(\bar{w}',\bar{q}_2,\bar{u}',\bar{v}')$ chart. (Charts \eqref{U11} and \eqref{U12} are the same with $U_{11}$ and $U_{12}$ below.)

By investigating other transformations in the same way, we can see that,
resolution of singularities requires two infinitesimally near blowing-ups (i.e. the center of the second blowing-up is included in the exceptional divisor of the first blowing up)
for $s_0$, $s_2$ or $s_4$, while it also requires two blowing-ups but not infinitesimally near for $s_1$, $s_3$ or $s_5$.

These blowing-ups are given by the following list.  
\begin{align}
\begin{array}{ll}
C_1: q_1=r_1=0  &U_1: (u_1, q_2,v_1, r_2)=(q_1, q_2 ,r_1q_1^{-1},r_2)\\
C_2: Q_1=r_1+a_1=0  &U_2: (u_2, q_2,v_2, r_2)=(Q_1, q_2 ,(r_1+a_1)Q_1^{-1},r_2)\\
C_3: q_2=r_2=0  &U_3: (q_1, u_3,r_1, v_3)=(q_1, q_2 ,r_1,r_2q_2^{-1})\\
C_4: Q_2=r_2+a_5=0  &U_4: (q_1, u_4,r_1, v_4)=(q_1, Q_2 ,r_1,(r_2+a_5)Q_2^{-1})\\
C_{5}: q_1=q_2=r_1+r_2-a_3+\eta=0 \hspace{-2cm} &\\
&\hspace{-3cm} U_5: (u_5,v_5,w_5, r_2)=(q_1, q_2q_1^{-1} ,(r_1+r_2-a_3+\eta)q_1^{-1},r_2)\\
C_{6}: Q_1=Q_2=r_1+r_2+\eta=0  &U_6: (u_6, v_6,w_6, r_2)=(Q_1, Q_2Q_1^{-1} ,(r_1+r_2+\eta)Q_1^{-1},r_2)\\
C_{7}: q_1-t=R_1=0  &U_7: (v_7, q_2,u_7, r_2)=((q_1-t)R_1^{-1}, q_2 ,R_1,r_2)\\
C_{8}: u_7=v_7-a_2 t=0  &U_8: (v_8, q_2,u_8, r_2)=((v_7-a_2 t)u_7^{-1}, q_2, u_7,r_2)\\
\\
C_{9}: q_2-1=R_2=0  &U_9: (q_1,v_9, r_1,u_9)=(q_1, (q_2-1)R_2^{-1} ,r_1,R_2)\\
C_{10}: u_9=v_9-a_4=0  &U_{10}: (q_1, v_{10},r_1 ,u_{10})=(q_1,(v_9-a_4)u_9^{-1}, r_1, u_9)\\
C_{11}: q_1-q_2=R_1=R_2=0  &\\
&\hspace{-3cm} U_{11}: (w_{11}, q_2, u_{11}, v_{11})=((q_1-q_2)R_1^{-1}, q_2 ,R_1,R_2R_1^{-1})\\
C_{12}: u_{11}=v_{11}+1=w_{11}-a_0q_2=0\hspace{-2cm} &  \\
&\hspace{-3cm} 
U_{12}: (w_{12}, q_2,u_{12}, v_{12})=((w_{11}-a_0q_2)u_{11}^{-1}, q_2 ,u_{11},(v_{11}+1)u_{11}^{-1}),
\end{array}\label{blowup_data}
\end{align}
where we write only one of new coordinate systems where the exceptional divisor is given by $u_i=0$ ($i=1,2,\cdots,12$). The other coordinate systems are automatically determined from the above data. For example, 
the other two coordinate systems for blowing up along $C_5$ are
$$U_5': (u_5',v_5',w_5', r_2)=(q_1q_2^{-1}, q_2 ,(r_1+r_2-a_3+\eta)q_2^{-1},r_2)$$
and
$$U_5'': (u_5'',v_5'',w_5'', r_2)=(q_1(r_1+r_2-a_3+\eta)^{-1}, q_2(r_1+r_2-a_3+\eta)^{-1} ,r_1+r_2-a_3+\eta,r_2),$$
where the exceptional divisor is given by $v_5'=0$ and $w_5''=0$ respectively.
More precisely, above coordinate systems are obtained only by blow-ups along open subset of $C_i$'s, but the other systems are also determined automatically by algebraic continuation (we assume $C_i$'s are irreducible).

\begin{theorem}
Let $\cX_{A}$ ($A=(a_0,a_1,\cdots, a_5,\eta, t)$) be a rational variety 
obtained by blowing-ups along $C_1,\cdots,C_{12}$ above. 
Each B\"aclund transformation $s_0,\cdots,s_5$ or $\rho$ is lifted to a pseudo-isomorphism from a rational variety $\cX_{A}$ to $\cX_{\bar{A}}$, where 
$\bar{A}=(\bar{a}_0,\bar{a}_1,\cdots, \bar{a}_5, \bar{\eta}, \bar{t})$ is given by  Table `Actions on parameters''.
\end{theorem}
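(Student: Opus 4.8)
The plan is to use the divisor-contraction criterion for pseudo-isomorphisms recalled just before Proposition~\ref{NSauto}: a birational map is a pseudo-isomorphism exactly when neither it nor its inverse pulls back a prime divisor to the zero divisor, i.e.\ when no prime divisor is contracted to a subvariety of codimension $\geq 2$. Hence for each $g\in\{s_0,\dots,s_5,\rho\}$ it suffices to show that, after the twelve blowing-ups listed in \eqref{blowup_data}, no prime divisor of $\cX_A$ is contracted by $g$ and no prime divisor of $\cX_{\bar A}$ is contracted by $g^{-1}$. The prime divisors that can possibly be contracted are exactly those appearing in the zero or pole loci of the Jacobians already computed in Section~2 --- namely the coordinate divisors $q_i,r_i\in\{0,\infty\}$ and the special hyper-surfaces such as $q_1-q_2=0$, $q_1-t=0$, $q_2-1=0$, $r_1+a_1=0$ and $r_1+r_2+\eta=0$ --- together with the twelve exceptional divisors $\cE_1,\dots,\cE_{12}$; every other prime divisor meets the locus where the Jacobian is a nonzero regular function and is therefore not contracted. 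The substance of the theorem is that the centers $C_1,\dots,C_{12}$ are chosen so that all of these contractions are resolved simultaneously.

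The main step is a chart-by-chart Jacobian computation that follows verbatim the worked case of $s_0$ in \eqref{U11}--\eqref{U12}. There $q_1-q_2=0$ is contracted on $(\P^1)^4$ and remains contracted after the single blowing-up $C_{11}$, but after the second, infinitesimally near, blowing-up $C_{12}$ the Jacobian $\partial(\bar w',\bar q_2,\bar u',\bar v')/\partial(q_1,q_2,r_1,r_2)$ equals the nonzero constant $-1$, so on the chart $U_{12}$ the map is a local isomorphism and the strict transform of $q_1-q_2=0$ is carried isomorphically, away from codimension two, onto an exceptional divisor of $\cX_{\bar A}$. First I would run the analogous computation for each remaining reflection in the distinguished chart of \eqref{blowup_data} where the last exceptional divisor is $\{u_i=0\}$: the pair $C_1,C_2$ resolves the contractions of $r_1=0$ and $r_1+a_1=0$ for $s_1$ (which a direct substitution shows map precisely onto the target centers), the pair $C_7,C_8$ resolves $q_1-t=0$ for $s_2$, the pair $C_5,C_6$ resolves $r_1+r_2+\eta=0$ and $r_1+r_2-a_3+\eta=0$ for $s_3$, the pair $C_9,C_{10}$ resolves $q_2-1=0$ for $s_4$, and the pair $C_3,C_4$ resolves $r_2=0$ and $r_2+a_5=0$ for $s_5$. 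In each case the property to be confirmed is that the Jacobian in the final chart is a unit along every coordinate divisor and along every exceptional divisor, so that no divisor is collapsed.

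Since $s_i^2=\mathrm{id}$ up to the parameter change, which merely interchanges $\cX_A$ and $\cX_{\bar A}$, the very same computation performed at $\bar A$ disposes of $g^{-1}$, and the pseudo-isomorphism property follows for every reflection at once. The element $\rho$ is easier: its Jacobian on the original affine chart is the nonzero constant $t^{-2}$, so $\rho$ contracts no hyper-surface of $(\P^1)^4$, and what remains is to check by direct substitution that $\rho$ (with $a_1\leftrightarrow a_5$, $a_2\leftrightarrow a_4$, $t\mapsto t^{-1}$) permutes the centers of \eqref{blowup_data} as $C_1\leftrightarrow C_3$, $C_2\leftrightarrow C_4$, $C_7\leftrightarrow C_9$, $C_8\leftrightarrow C_{10}$, while fixing $C_5,C_6,C_{11},C_{12}$, so that it extends to an isomorphism of the blown-up varieties off codimension two. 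Finally, for every $g$ the stated action on $\bar A$ in the table ``Actions on parameters'' is verified by substituting the coordinate formulas into the defining equations of the $C_j$ and reading off the transformed parameters.

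The hard part will be the exhaustive bookkeeping: $(\P^1)^4$ has $2^4=16$ affine charts and each blowing-up introduces further charts, so one must be sure that every prime divisor has been examined in at least one chart where it is visible, and in particular that no exceptional divisor is accidentally contracted. This is most delicate for the infinitesimally near pairs $(C_{11},C_{12})$, $(C_7,C_8)$ and $(C_9,C_{10})$ used by $s_0,s_2,s_4$, where the center of the second blowing-up lies on the exceptional divisor of the first; there one must track how that first exceptional divisor and the strict transform of the contracted hyper-surface separate under the second blowing-up and confirm that both are sent to honest divisors rather than collapsed. Once the Jacobian in the final chart is seen to be a unit --- as in the value $-1$ obtained for $s_0$ --- Proposition~\ref{NSauto} guarantees that each $g$ induces an intersection-preserving automorphism of the N\'eron-Severi bilattice, completing its identification as a pseudo-isomorphism $\cX_A\to\cX_{\bar A}$.
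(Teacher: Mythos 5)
Your proposal is correct and takes essentially the same approach as the paper, whose entire proof is the one-line assertion that direct computation confirms the Jacobians of the lifted mappings do not vanish; your chart-by-chart contraction analysis (including the involution trick for the inverses and the center permutation for $\rho$) is precisely what that computation amounts to. One small wobble: your closing appeal to Proposition~\ref{NSauto} is logically backwards, since that proposition is a consequence of being a pseudo-isomorphism rather than a criterion for it, but this is harmless because the non-vanishing of the Jacobians in all charts already establishes the theorem.
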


\begin{proof}
It is confirmed by direct computation that the Jacobians of lifted mappings do not vanish.  
\end{proof}

\begin{remark}
Transformation $\pi$ is not lifted to a pseudo-isomorphism from $\cX_A$ to $\cX_{\bar{A}}$.
Indeed, $q_2=0$ (a hyper-surface) is mapped to 
$$(\bar{w}_{11}, \bar{q}_2, \bar{u}_{11},\bar{v}_{11})=
\left( \frac{r_1}{q_1}, \frac{1}{t},0,-1+\frac{t}{q_1}\right)
$$
with $\bar{q}_1-\bar{q}_2=\bar{R}_1=\bar{R}_2=0$.
Hence the image of $q_2=0$ is included in a two-dimentional subvariety
of $C_{11}$ such that the subvariety is different from $C_{12}$.
 Hence $q_2=0$ is contracted to lower dimensional variety in $\cX_{\bar{A}}$.
The same thing happens also for $Q_1=0$: which is contracted to 
a two-dimentional subvariety of $C_{11}$ different from $C_{12}$. 

Of course, there is possibility to be able to construct the space of initial conditions that
allow  $\pi$, but it needs several more blowing-ups, and the action of the root system would become more complicated.
\end{remark}

Let us denote $\cE_i$ as the class of total transform of the exceptional divisor obtained
by the blowing-up along $C_i$.
Then, since $C_5,C_6,C_{11},C_{12}$ are subvarieties of codimension 3, while
the others are of codimension 2, the anti-canonical divisor class is 
$$-K_{\cX_A}=2\cH_{q_1}+2\cH_{q_2}+2\cH_{r_1}+2\cH_{r_2}-
\sum_{i=1,2,3,4,7,8,9,10}\cE_i - 2 \sum_{i=5,6,11,12}\cE_i.$$ 

Moreover, since $C_8$, $C_{10}$ and $C_{12}$ are subvarieties of $\cE_7$, $\cE_9$ and $\cE_{11}$ respectively, $\cE_7-\cE_8$, $\cE_9-\cE_{10}$ and $\cE_{11}-\cE_{12}$ are effective divisor class in $\cX_A$. Hence, $-K_{\cX_A}$ is decomposed into irreducible divisors as 
\begin{align}
-K_{\cX_A}=&(\cH_{q_1}-\cE_1-\cE_5)+(\cH_{q_1}-\cE_2-\cE_6)
+(\cH_{q_2}-\cE_3-\cE_5)+(\cH_{q_2}-\cE_4-\cE_6)\nonumber \\
&+2(\cH_{r_1}-\cE_7-\cE_{11})+2(\cH_{r_2}-\cE_9-\cE_{11})\nonumber \\
&+ (\cE_7-\cE_8)+(\cE_9-\cE_{10})+2(\cE_{11}-\cE_{12}), \label{decomposition}
\end{align} 
where each irreducible divisor is explicitly written by coordinates as
\begin{align*}
\cH_{q_1}-\cE_1-\cE_5 :& q_1=0\\
\cH_{q_1}-\cE_2-\cE_6:& Q_1=0\\
\cH_{q_2}-\cE_3-\cE_5 :& q_2=0\\
\cH_{q_2}-\cE_4-\cE_6:& Q_2=0\\
\cH_{r_1}-\cE_7-\cE_{11}:& R_1=0\\
\cH_{r_2}-\cE_9-\cE_{11}:& R_2=0.
\end{align*}
Note that these subvarieties correspond to vertical leaves, i.e. the ordinary differential equations are not defined on these subvarieties.

\begin{remark}
Through the natural identification between exceptional divisors for different values of parameter $A$'s, we use the same symbol $\cE_i$'s for all $A$'s. 
\end{remark}

\section{The root system and the actions on the bilattice}
We can directly compute the actions $s_i$'s on the Picard group. However, in order to see geometric way of construction of the root system explicitly, let us reconstruct the B\"acklund transformations from a root system defined in the Neron-Severi bilattice. 

We use a higher dimensional analog of the notion of Cremona isometry introduced in \cite{CT2019}.
\begin{definition} An automorphism $s$ of the N\'eron-Severi bilattice is called a Cremona isometry if the following three properties are satisfied:\\
(a) $s$ preserves the intersection form;\\
(b) $s$ leaves the decomposition of $-K_{\cX}$ fixed;\\
(c) $s$ leaves the semigroup of effective classes of divisors invariant.
\end{definition}

Our aim is to realise the group of Cremona isometries as a root system,  
though it is the most heuristic part of this procedure.
Different to the two-dimensional case, we merely know the decomposition of the null-root (the anti-canonical divisor) in one of the dual spaces as \eqref{decomposition}, and hence, we can collect the vectors orthogonal to the elements of the decomposition only in the homology space.

Let us set the roots and co-roots as
\begin{align*}
\alpha_0=\cH_{q_1}+\cH_{q_2}-\cE_{5,6,11,12},\quad 
\alpha_1=\cH_{r_1}-\cE_{1,2},\quad
\alpha_2=\cH_{q_1}-\cE_{7,8},\\
\alpha_3=\cH_{r_1}+\cH_{r_2}-\cE_{5,6,11,12},\quad 
\alpha_4=\cH_{q_2}-\cE_{9,10},\quad
\alpha_5=\cH_{r_2}-\cE_{3,4}
\end{align*} 
and 
\begin{align*}
\check{\alpha}_0=h_{r_1}+h_{r_2}-e_{11,12},\quad 
\check{\alpha}_1=h_{q_1}-e_{1,2},\quad
\check{\alpha}_2=h_{r_1}-e_{7,8},\\
\check{\alpha}_3=h_{q_1}+h_{q_2}-e_{5,6},\quad 
\check{\alpha}_4=h_{r_2}-e_{9,10},\quad
\check{\alpha}_5=h_{q_2}-e_{3,4},
\end{align*}
where $\cE_{i_1,\dots,i_n}$ and $e_{i_1,\dots,i_n}$ are 
the abbreviations of $\cE_{i_1}+\cdots+\cE_{i_n}$ and $e_{i_1}+\cdots+e_{i_n}$
respectively.
Then, they constitute the root basis of type $A_5^{(1)}$ whose Cartan matrix and the Dynkin diagram are as follows.
\begin{center}
$\begin{bmatrix}
2&-1&0&0&0&-1\\
-1&2&-1&0&0&0\\
0&-1&2&-1&0&0\\
0&0&-1&2&-1&0\\
0&0&0&-1&2&-1\\
-1&0&0&0&-1&-2
\end{bmatrix}$
\raisebox{-17mm}{
\begin{tikzpicture}[line cap=round,line join=round,>=triangle 45,x=0.9cm,y=0.9cm]
\clip(-2,-2) rectangle (5.4,1.5);
\draw [line width=1.pt] (1.5,0.732)-- (-1.,-1.);
\draw [line width=1.pt] (-1.,-1.)-- (0.25,-1.);
\draw [line width=1.pt] (0.25,-1.)-- (1.5,-1.);
\draw [line width=1.pt] (1.5,-1.)-- (2.75,-1.);
\draw [line width=1.pt] (2.75,-1.)-- (4.0,-1.);
\draw [line width=1.pt] (1.5,0.732)-- (4.0,-1.);
\draw [fill=black] (1.5,0.732) circle (2.5pt);
\draw [fill=black] (-1.,-1.) circle (2.5pt);
\draw [fill=black] (0.25,-1.) circle (2.5pt);
\draw [fill=black] (1.5,-1) circle (2.5pt);
\draw [fill=black] (2.75,-1) circle (2.5pt);
\draw [fill=black] (4.0,-1) circle (2.5pt);
\begin{small}
\draw [color=black] (1.5,1.2) node {$\alpha_0$};
\draw [color=black] (-1.,-1.5) node {$\alpha_1$};
\draw [color=black] (0.25,-1.5) node {$\alpha_2$};
\draw [color=black] (1.5,-1.5) node {$\alpha_3$};
\draw [color=black] (2.75,-1.5) node {$\alpha_4$};
\draw [color=black] (4.,-1.5) node {$\alpha_5$};
\end{small}
\end{tikzpicture}}
\end{center} 

\begin{remark}
\begin{enumerate}
\item The decomposition \eqref{decomposition} constitutes of 9 irreducible divisors in rank 14 space (with null-vector). Hence, rank 6 of type $A_5^{(1)}$ is maximum.  
\item The number of free parameters of the space of initial conditions having the decomposition \eqref{decomposition} is 19: 2 for $C_i$ ($i=1,2,3,4,7,9,12$), and 1 for $C_i$ ($i=5,6,8,10,11$). We can reduce it using M\"obius transformation with respect to each coordinate to $19-3\times 4 =7$. We leave one of 12 variables (able to be fixed by M\"obius transformations) to be free for the continuous time variable $t$, while the sum of $a_i$'s is fixed to be 1.   
Hence we have 7 variables in total as $a_0, \cdots, a_5$, $\eta$ and $t$ in the formulation of Fuji-Suzuki \cite{FS2010}.
  
\end{enumerate}
\end{remark}

The actions of the roots on the N\'eron-Severi bilattice are given by the formulae
$$ s_i({\mathcal D})={\mathcal D} - 2\frac{\langle {\mathcal D},\check{\alpha}_i\rangle}{\langle \alpha_i, \check{\alpha}_i \rangle}\alpha_i,\quad 
s_i(d)=d - 2\frac{\langle \alpha_i, d \rangle}{\langle \alpha_i, \check{\alpha}_i \rangle }\check{\alpha}_i
$$  
for any ${\mathcal D} \in H^2(\cX_A,\Z)$ and $d \in H_2(\cX_A,\Z)$, while
$\rho$ acts on $H^2(\cX_A,\Z)$ as 
\begin{align}
&\begin{array}{lcl}
\cH_{q_1} \leftrightarrow \cH_{q_2},\quad \cH_{p_1} \leftrightarrow \cH_{p_2}
&\qquad&h_{q_1} \leftrightarrow h_{q_2},\quad h_{p_1} \leftrightarrow h_{p_2}\\
\cE_1\leftrightarrow \cE_3,\quad \cE_2\leftrightarrow \cE_4
&&e_1\leftrightarrow e_3,\quad e_2\leftrightarrow e_4\\
\cE_7\leftrightarrow \cE_9,\quad \cE_8\leftrightarrow \cE_{10}
&&e_7\leftrightarrow e_9,\quad e_8\leftrightarrow e_{10}
\end{array}
\end{align}

\noindent \textit{Translation}\ \\
A translation of $A_5^{(1)}$ 
\begin{align}
T_{\alpha_i}: \bar{\alpha}_i= \alpha_i+2\delta,
\quad \bar{\alpha}_{i\pm1}= \alpha_{i\pm 1}-\delta,\quad
\bar{\alpha}_j= \alpha_j \quad (|i-j|>1),
\end{align}
where $\delta=-K_{\cX}$, is realised as 
$$T_{\alpha_i}=s_{i+1}\circ s_{i+2}\circ s_{i+3}\circ s_{i+4}\circ s_{i+5}\circ s_{i+4}\circ s_{i+3}\circ s_{i+2}\circ s_{i+1}\circ s_i.$$
The action of $T_{\alpha_i}$ on the Picard group is given by Kac's formula
\begin{align}
T_{\alpha_i}^*({\mathcal D})={\mathcal D}+\langle {\mathcal D},\check{\delta}\rangle\alpha_i
+\left(\langle {\mathcal D},\check{\delta}\rangle+\frac{1}{2}
\langle \alpha_i, \check{\alpha}_i \rangle
\langle {\mathcal D},\check{\alpha}_i\rangle \right)\delta,
\end{align}
for any ${\mathcal D} \in H^2(\cX_A,\Z)$, where
\begin{align}
\check{\delta}=&\sum_{i=0}^5 \check{\alpha}_i=
2h_{q_1}+2h_{q_2}+2h_{r_1}+2h_{r_2}-\sum_{i=1}^{12}e_i.
\end{align}
Since the degrees of the iteration of this mapping with respect to
$q_1$, $q_2$, $r_1$, $r_2$
are given by the coefficients
of $\cH_{q_1}$,  $\cH_{q_2}$, $\cH_{r_1}$, $\cH_{r_1}$ in
\begin{align}
\left((T_{\alpha_i})^n\right)^*({\mathcal D})={\mathcal D}+n \langle {\mathcal D},\check{\delta}\rangle\alpha_i
+\left(n^2\langle {\mathcal D},\check{\delta}\rangle+\frac{n}{2}
\langle \alpha_i, \check{\alpha}_i \rangle
\langle {\mathcal D},\check{\alpha}_i\rangle \right)\delta
\end{align} 
with ${\mathcal D}=\cH_{q_1}, \cH_{q_2}, \cH_{p_1}, \cH_{p_2}$
(as explained in \S~1.3), they increase quadratically with respect to $n$.

\subsection*{Acknowledgments} The author thanks careful reading and thoughtful comments of the anonymous reviewer.


\end{document}